\documentclass[a4paper, 11pt, fleqn, leqno]{article}
\RequirePackage{fullpage}

\usepackage{amsthm,amsmath,amssymb}
\usepackage{mathtools}
\usepackage{subcaption}
\usepackage[dvipsnames]{xcolor}
\usepackage[pdftex,breaklinks,colorlinks,
    citecolor={BlueViolet}, linkcolor={Blue},urlcolor=Maroon]{hyperref}
\usepackage{tikz,pgfkeys}
\usetikzlibrary{arrows.meta}
\usetikzlibrary{quotes}
\usetikzlibrary {shapes.geometric}
\usepackage{charter,eulervm}\usepackage{enumitem} 
\usepackage[final,expansion=alltext,protrusion=true]{microtype}
\usepackage{marvosym} 

\theoremstyle{plain} \newtheorem{theorem}{Theorem}[section]
\newtheorem{lemma}[theorem]{Lemma}
\newtheorem{proposition}[theorem]{Proposition}

\tikzset{
  max clique/.style  = {rounded corners, draw= gray, fill = white, inner sep=4pt, minimum width=1.5pt},
  empty vertex/.style  = {circle, draw, fill = white, inner sep=1.5pt, minimum width=1.5pt}  
}

\title{Cluster Vertex Deletion on Chordal Graphs}
\author{
  Yixin Cao\thanks{Department of Computing, Hong Kong Polytechnic University, Hong Kong, China. \texttt{yixin.cao@polyu.edu.hk}. }
  \and
  Peng Li\thanks{School of Mathematics and Computer Science, Hanjiang Normal University, Shiyan, China. lipengsjtu@163.com}
}
\date{}

\begin{document}
\maketitle

\begin{abstract}
  We present a polynomial-time algorithm for the cluster vertex deletion problem on chordal graphs, resolving an open question posed in different contexts by Cao et al.~[Theoretical Computer Science, 2018], Aprile et al.~[Mathematical Programming, 2023], Chakraborty et al.~[Discrete Applied Mathematics, 2024], and Hsieh et al.~[Algorithmica, 2024].  We use dynamic programming over clique trees and reduce the computation of the optimal subproblem value to the minimization of a submodular set function.
\end{abstract}

\section{Introduction}

All graphs in this paper are undirected and simple.
For a graph~$G$, we denote its vertex set by~$V(G)$ and write $n = |V(G)|$.
For a vertex~$v\in V(G)$, we denote its (open) neighborhood by~$N(v)$ and its closed neighborhood by~$N[v] = N(v)\cup\{v\}$. For a vertex set~$X\subseteq V(G)$, we write~$N[X] = \bigcup_{v\in X} N[v]$ and~$N(X) = N[X]\setminus X$.
The graph comes with a weight function~$w : V(G) \to \mathbb{Q}$.  The weight of a vertex set~$X \subseteq V(G)$ is~$w(X) = \sum_{v \in X} w(v)$.
A \emph{clique} is a set of vertices that are pairwise adjacent.
We denote by~$\omega(G)$ the weight of a maximum-weight clique in~$G$.

A \emph{cluster graph} is a graph whose components are all cliques.
In the \emph{cluster vertex deletion} problem, we are given a vertex-weighted graph and asked to delete a minimum-weight set of vertices so that the resulting graph is a cluster graph.
Cluster graphs naturally arise in graph-based data clustering, where one seeks to ``clean'' noisy data by removing a small number of outliers.
The cluster vertex deletion problem is NP-hard~\cite{lewis-80-node-deletion-np}.
Since a graph is a cluster graph if and only if it has no~$P_{3}$ (an induced path on three vertices), cluster vertex deletion can also be seen as a hitting set problem in a 3-uniform hypergraph, where each hyperedge corresponds to an induced~$P_{3}$.  This renders cluster vertex deletion a canonical benchmark for algorithmic techniques in parameterized complexity~\cite{tian-25-cvd}, kernelization~\cite{bessy-23-rainbow-matching}, and approximation algorithms~\cite{aprile-23-cvd}.
The minimum vertex-deletion weight on a graph $G$ is called the \emph{distance to cluster} of $G$ and is widely used as a structural parameter~\cite{doucha-12-cluster-vertex-deletion-as-parameter, goyal-25-dominating-set}.
It is one of the few nontrivial structural parameters that remain meaningful on dense graphs, in contrast with the abundance of parameters for sparse graphs.\footnote{See also \url{https://www.graphclasses.org/classes/par_29.html}.}

The forbidden-subgraph characterization of cluster graphs, together with the local ratio technique~\cite{bar-yehuda-04-local-ratio}, immediately yields a 3-approximation algorithm for cluster vertex deletion.
Aprile et al.~\cite{aprile-23-cvd} improved the ratio to 2 by reducing the problem to instances in which for all vertices~$v$, the subgraph induced by~$N(v)$ is free of any induced cycle of length at least four or a $2 P_{3}$ (two disjoint induced $P_{3}$'s with no edge between them).
A graph is \emph{chordal} if it does not contain an induced cycle of length at least four.
This motivated them to pose an open question on the complexity of cluster vertex deletion on $2 P_{3}$-free chordal graphs; they suggested in particular that the problem might be hard on general chordal graphs.

Chordal graphs form a central graph class in which many vertex-deletion problems become tractable through clique-tree decompositions, and thus they provide a natural boundary case for structural algorithmic questions.  Independently, Cao et al.~\cite{cao-18-vertex-deletion-chordal} raised the complexity of cluster vertex deletion on chordal graphs in their systematic study of vertex-deletion problems between subclasses of chordal graphs.  They showed that the problem is polynomial-time solvable on two important subclasses, namely, split graphs, whose vertex set can be partitioned into a clique and an independent set, and interval graphs, which are defined in the next section.  However, the case of chordal graphs remained one of the three open problems posed in~\cite{cao-18-vertex-deletion-chordal}.  More recently, Chakraborty et al.~\cite{chakraborty-24} generalized the split-graph algorithm of Cao et al.~\cite{cao-18-vertex-deletion-chordal} to well-partitioned chordal graphs, a class lying strictly between split graphs and chordal graphs, and reiterated the same open question.  This graph class was introduced by Ahn et al.~\cite{ahn-22} to better understand problems that are tractable on split graphs but hard on chordal graphs.

Cluster vertex deletion also appears naturally in the study of vertex deletion to $d$-claw-free graphs.
Bonomo{-}Braberman et al.~\cite{bonomo-24-claw-vertex-deletion} and Hsieh et al.~\cite{hsieh-24-claw-vertex-deletion} investigated the problem of deleting vertices to destroy all induced $K_{1,d}$'s (so-called $d$-claws).
Note that $1$- and $2$-claws correspond to edges and $P_{3}$'s, respectively.
Since the complexity of vertex cover (the $d=1$ case) is well understood on most natural graph classes~\cite{alekseev-82-independent-set}, cluster vertex deletion (the $d=2$ case) can be viewed as the simplest nontrivial problem in this framework.
In contrast, 3-claw vertex deletion is already NP-hard on split graphs~\cite{bonomo-24-claw-vertex-deletion}, and remains NP-hard even on $4$-claw-free split graphs~\cite{hsieh-24-claw-vertex-deletion}.

\paragraph{Our contribution.}
We resolve this open question with a dynamic-programming algorithm for cluster vertex deletion on chordal graphs.

\begin{theorem}\label{thm:alg}
There is a polynomial-time algorithm for cluster vertex deletion on chordal graphs. \end{theorem}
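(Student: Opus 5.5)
We plan a dynamic program over a rooted clique tree~$T$ of the chordal graph~$G$. Each node~$t$ of~$T$ is a maximal clique~$K_t$. Let~$G_t$ be the subgraph induced by the union of the cliques in the subtree rooted at~$t$. Equivalently, we compute a maximum-weight set~$S$ of kept vertices such that~$G[S]$ is a cluster graph. The structural fact driving the recursion is this: in a chordal graph every clique of~$G[S]$ lies inside some maximal clique, and in~$G[S]$ each clique component~$C$ is a clique of~$G$. Hence~$C\subseteq K_t$ for some node~$t$, and the nodes containing~$C$ form a subtree. Moreover, at most one component of~$G[S]$ can meet any given separator~$K_t\cap K_{p(t)}$. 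The reason is that two distinct components meeting a common clique would contain adjacent vertices, which is impossible. So the interface between~$G_t$ and the rest of the graph is either empty or a single clique component, restricted to the separator.

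The DP state at node~$t$ is a subset~$X$ of the separator~$K_t\cap K_{p(t)}$. It records which separator vertices are kept; all of them belong to one common cluster. Since~$X$ ranges over exponentially many subsets, we cannot tabulate naively. Instead we will show that the relevant quantity has a compact form. Let~$f_t(X)$ be the best weight of a solution in~$G_t$ whose kept separator part is exactly~$X$ and whose cluster containing~$X$ may still be extended upward. The plan is to express~$f_t(X)$ in terms of a cluster~$C\subseteq K_t$ with~$C\supseteq X$, together with child values. The key observation is that a kept vertex~$v$ of a child subtree that is adjacent to~$C$ must lie in~$C$. Therefore choosing~$C$ forces deleting every vertex of~$N(C)\setminus C$ that is not in another cluster. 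In a clique tree these forced deletions are upward closed along subtrees.

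This leads to the central claim. For a fixed node~$t$ and a fixed child configuration, the value obtained by choosing~$C\subseteq K_t$ is the weight of~$C$ plus a sum over children~$c$ of~$g_c(C\cap K_c)$. Each~$g_c$ is the best value in~$G_c$ when the vertices of~$C\cap K_c$ are kept in the top cluster, and the vertices of~$(K_t\setminus C)\cap K_c$ may only appear in clusters disjoint from~$C$. We intend to prove that each~$g_c$, viewed as a set function of~$C$, is supermodular. Equivalently, the loss~$-g_c$ is submodular. The idea is exchange: given optimal solutions for~$C_1$ and~$C_2$, recombine their clusters in the subtree to build feasible solutions for~$C_1\cup C_2$ and~$C_1\cap C_2$ whose total weight is at least the original sum. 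This is an uncrossing argument using the tree structure of the vertex sets~$T_v$.

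Given submodularity, the optimal cluster at~$t$ for each candidate separator set~$X$ is found by polynomial-time submodular minimization over subsets of~$K_t$ containing~$X$. To avoid tabulating all~$X$, we restrict the interface: the parent only ever queries~$g_c$ through an evaluation oracle. The oracle is implemented recursively by submodular minimization, and we will argue that the number of relevant maximal cliques is linear, so the recursion is polynomial. The main obstacle is the uncrossing proof of submodularity. It must handle vertices of~$K_t\setminus C$ that are kept in clusters of child subtrees, since these interact across different children through shared separator vertices. I would first prove the lemma for a single child with an exact characterization of feasible recombinations, then extend additively, because distinct children only share vertices of~$K_t$. The oracle-based recursion must also be kept from blowing up, for example by memoizing on~$(c, C\cap K_c)$ or by showing that the minimizers can be computed bottom-up.
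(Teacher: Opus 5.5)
\paragraph{The central supermodularity claim fails.} For your recursion to be correct, the function you optimize at $t$ must equal $F(C)$: the best weight of a solution of $G_t$ whose cluster through $K_t$ has trace $C$ on $K_t$ and may continue down the subtree. This $F$ is not supermodular. Take the path $z$--$a$--$b$--$y$ with unit weights, rooted at $K_t=\{a,b\}$ with children $\{a,z\}$ and $\{b,y\}$. Then $F(\{a,b\})=2$, since $z$ and $y$ must be deleted, and $F(\emptyset)\le 3$. But $F(\{a\})=F(\{b\})=3$, using clusters $\{a,z\},\{y\}$. So $F(\{a,b\})+F(\emptyset)<F(\{a\})+F(\{b\})$.

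Your additive form $w(C)+\sum_c g_c(C\cap K_c)$ hides this because it is not a correct decomposition. Vertices of $K_t\setminus K_c$ have no neighbours in $V(G_c)\setminus K_t$. Hence the cluster can continue into child $c$ only when $C\subseteq K_c$, and into at most one child. The children are therefore coupled, and $g_c$ is not a function of $C\cap K_c$ alone. If $g_c$ allows the extension, the sum gives $4$ for $C=\{a,b\}$; if it forbids it, the solution $\{a,z\},\{y\}$ is lost. Splitting into cases by which child the cluster enters does not help either. That branch optimizes the child's trace function over subsets of $K_t\cap K_c$, and the same example one level down breaks it again ($K_t=\{a,b,p\}$, $K_c=\{a,b,q\}$ with $w(q)=0$, grandchildren $\{a,z\},\{b,y\}$). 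Incidentally, for $C\neq\emptyset$ no vertex of $K_t\setminus C$ can be kept at all, since $K_t$ is a clique; so the obstacle you single out does not arise, but this one does.

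\paragraph{The running time is not polynomial.} Separately, the recursive oracle only gives an $n^{O(\mathrm{depth})}$ bound. Each evaluation of $g_c$ runs a full minimization at $c$, whose evaluations run minimizations at every grandchild. Memoizing on $(c,C\cap K_c)$ does not bound the number of distinct queries.

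\paragraph{The missing idea.} What makes the paper's algorithm work is never carrying a partial cluster across a separator. The paper's subproblem $G_K$ consists of the vertices of cliques in $T_K$ minus $P(K)$, so each table entry is a single number $\psi(G_K)$ with no interface state. If an optimal solution of $G_K$ meets $K$, it does so in one cluster $C$, lying in a single maximal clique $Q\in T_K$, possibly far below $K$. The algorithm guesses $Q$ rather than confining $C$ to $K$, and fixes the \emph{whole} cluster. Then all of $N(C)$ is deleted, and every component of $G_K-N[C]$ is a component of some $G_Z$ with $Z\in T_K\setminus\{K\}$ (Lemma~\ref{lem:subproblem}). So $w(C)+\psi(G_K-N[C])$ is a table lookup.

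The function that is actually supermodular is $X\mapsto w(X)+\psi(H-N[X])$, in which $X$ is exactly a cluster (Theorem~\ref{thm:submodular}); on the path above it takes value $2$, not $3$, at $\{a\}$. Its proof is the uncrossing you had in mind, made precise: chordality shows that $A_1A_2$ is a bridge of the bipartite cluster-adjacency graph, and one then swaps sides. Finally, fixing one vertex $v\in Q\cap K$ keeps every queried set meeting $K$, so the lookup stays valid. Jiang's algorithm then maximizes $f_v$ over $X\subseteq Q\setminus\{v\}$ with no nested oracles.
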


It will be more convenient to work with the equivalent \emph{maximum-weight induced cluster subgraph} problem.
A vertex set~$S \subseteq V(G)$ is called a \emph{solution} (of {maximum-weight induced cluster subgraph}) if $G[S]$ is a cluster graph, in which case each component of~$G[S]$ is a \emph{cluster} of the solution.
The goal is to find a solution~$S$ maximizing the total weight, and we write $\psi(G)$ for its weight---then~$V(G)\setminus S$ is a minimum-weight deletion set for the cluster vertex deletion problem.
We will consistently use this formulation in the rest of the paper.

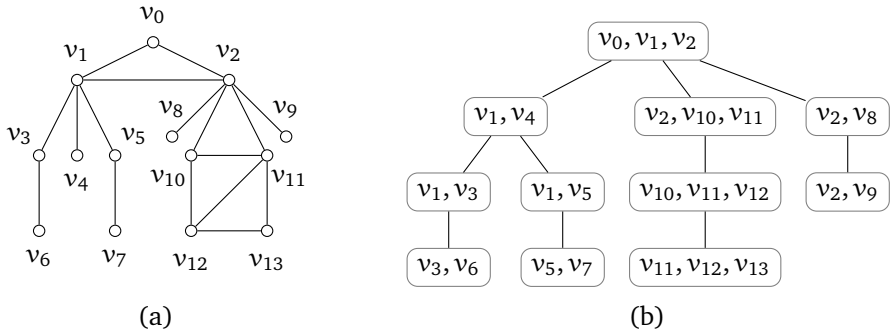
\begin{figure}[h]
  \centering\small
  \begin{subfigure}[b]{0.3\linewidth}
    \centering
    \begin{tikzpicture}[every node/.style={empty vertex}, scale=1]
      \foreach[count = \i from 0] \v/\l in {(0, 4.5)/, (-1, 4)/, (1, 4)/,
        (-1.5, 3)/above left, (-1, 3)/{below}, (-.5, 3)/above right,
        (-1.5, 2)/below, (-.5, 2)/below,
        (.25, 3.25)/, (1.75, 3.25)/,
        (.5, 3)/below left, (1.5, 3)/below right,
        (.5, 2)/below, (1.5, 2)/below}
      \node["$v_{\i}$" \l] (v\i) at \v {};
\draw[] (v6) -- (v3) -- (v1);
      \draw[] (v7) -- (v5) -- (v1);
      \draw[] (v4) -- (v1) -- (v0);
      \draw[] (v1) -- (v2) -- (v0);
      \draw[] (v8) -- (v2) -- (v9);
      \draw[] (v12) -- (v10) -- (v2);
      \draw[] (v13) -- (v11) -- (v2);
      \draw[] (v10) -- (v11) -- (v12) -- (v13);
\end{tikzpicture}
    \caption{}\end{subfigure}
  \begin{subfigure}[b]{0.5\linewidth}
    \centering
    \begin{tikzpicture}[every node/.style={max clique}, yscale=.5, xscale=.75]
      \small
      \foreach[] \p/\c/\x in {(4.5, 6)/{$v_{0}, v_{1}, v_{2}$}/012, (2, 4)/{$v_{1}, v_{4}$}/14,
        (1, 2)/{$v_{1}, v_{3}$}/13, (3, 2)/{$v_{1}, v_{5}$}/15,
        (1, 0)/{$v_{3}, v_{6}$}/36, (3, 0)/{$v_{5}, v_{7}$}/57,
        (5.5, 4)/{$v_{2}, v_{10}, v_{11}$}/2ab, (5.5, 2)/{$v_{10}, v_{11}, v_{12}$}/abc, (5.5, 0)/{$v_{11}, v_{12}, v_{13}$}/bcd, 
        (8, 4)/{$v_{2}, v_{8}$}/28, (8, 2)/{$v_{2}, v_{9}$}/29} {
        \node (\x) at \p {\c};
}
      \draw[] (36) -- (13) -- (14);
      \draw[] (57) -- (15) -- (14);
      \draw[] (29) -- (28) -- (012);
      \draw[] (bcd) -- (abc) -- (2ab);
      \draw[] (14) -- (012) -- (2ab);
    \end{tikzpicture}
    \caption{}\end{subfigure}
  \caption{(a) A chordal graph with 11 maximal cliques; (b) one of its clique trees.}
  \label{fig:clique-tree}
\end{figure}

Chordal graphs admit a well-known structural decomposition via clique trees.  See Figure~\ref{fig:clique-tree} for an example, where each node of the tree in Figure~\ref{fig:clique-tree}(b) corresponds to a maximal clique of the graph in Figure~\ref{fig:clique-tree}(a).
A chordal graph~$G$ has only linearly many maximal cliques, and these cliques can be arranged as a tree~$T$ such that, for every vertex~$v \in V(G)$, the nodes of~$T$ corresponding to maximal cliques containing~$v$ form a subtree~\cite{dirac-61-chordal-graphs}.
Such a tree~$T$ is called a \emph{clique tree} of~$G$ and can be constructed in linear time.
Dynamic programming on a clique tree, processed in a bottom-up fashion, is the standard algorithmic technique on chordal graphs, which is the basis of dynamic programming on tree decompositions of general graphs.

Our definition of subproblems of the dynamic programming on a clique tree is different from most algorithms of the same approach.
For each node~$K$ (i.e., maximal clique of~$G$) of the rooted clique tree~$T$, we consider the subproblem on the subgraph~$G_{K}$ induced by the vertices that belong \textit{only} to maximal cliques in the subtree rooted at~$K$; vertices that also appear in the parent of~$K$ (if any) are excluded.
If an optimal solution to this subproblem does not intersect $K$, it can be obtained by combining optimal solutions of the subproblems associated with the children of~$K$.
(This explains our ``unusual'' subproblem definition: we must exclude vertices in~$K$ from these subproblems.)
Consequently, the main difficulty is to handle optimal solutions that intersect $K$.  

Any such solution can be readily read from the dynamic programming table once we know the unique cluster~$C$ that intersects $K$.
This cluster must be a subset of some maximal clique~$Q$ in the subtree of~$K$.
We ``guess'' such a~$Q$ and a vertex~$v \in K \cap Q$, and then choose the remaining vertices of~$C$ as a set~$X \subseteq Q\setminus \{v\}$ that maximizes~$\psi( G_{K} - N[ X\cup \{v\} ] ) + w( X\cup \{v\} )$: 
this objective is exactly the contribution of the cluster $C = X\cup \{v\}$ plus the best solution in the remaining components after deleting the cluster and its neighbors.
The key structural ingredient is that, for fixed $K$, $Q$, and~$v$, the objective function on~$X$ is supermodular: a function~$f : 2^{U} \to \mathbb{Q}$ is \emph{supermodular} if for all $X, Y \subseteq U$,
\[
  f(X\cup Y) + f(X\cap Y) \ge f(X) + f(Y).
\]
Thus, we can maximize it in polynomial time using Jiang's oracle-based algorithm~\cite{jiang-22-minimize-convex-functions}.  The value of~$\psi( G_{K} - N[ X\cup \{v\} ] )$ is obtained from the dynamic programming table, since~$G_{K} - N[ X\cup \{v\} ]$ decomposes into subgraphs corresponding to subtrees in the clique tree.  Combining these cases for all nodes~$K$ yields a polynomial-time procedure for updating the dynamic programming table and thereby proves Theorem~\ref{thm:alg}.

We will prove the underlying structural statement in a more general form that might be of independent interest.
Fix a clique $K$ of a weighted chordal graph~$H$.  For any~$X\subseteq K$, let~$g(X)$ be the weight of a maximum-weight solution of~$H$ in which $X$ is a cluster.
\begin{theorem}\label{thm:submodular}
  The function $g : 2^{K} \to \mathbb{Q}$ is supermodular.
\end{theorem}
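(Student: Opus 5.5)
Since $X$ is a cluster of a solution exactly when the solution consists of $X$ together with a solution of $H-N[X]$, we have $g(X)=w(X)+\psi(H-N[X])$ for nonempty $X$. For $X=\emptyset$ I take $g(\emptyset)=\psi(H)$; I would check that this agrees with the paper's convention. The term $w(X)$ is modular, so it suffices to prove that $h(X)=\psi(H-N[X])$ is supermodular on $2^{K}$. Fix $X,Y\subseteq K$ and write $A=V(H)\setminus N[X]$ and $B=V(H)\setminus N[Y]$. Then $V(H)\setminus N[X\cup Y]=A\cap B$. Also $N[X\cap Y]\subseteq N[X]\cap N[Y]$, so $V(H)\setminus N[X\cap Y]\supseteq A\cup B$. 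Because $\psi$ is monotone under taking induced subgraphs, the whole statement reduces to the uncrossing inequality
\[
  \psi(H[A\cup B])+\psi(H[A\cap B]) \;\ge\; \psi(H[A])+\psi(H[B]).
\]
Since $\psi(H[\cdot])$ is not supermodular in general, this inequality must use chordality and the fact that $X,Y$ lie in the common clique $K$.

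Step 1 is the structural description. Root a clique tree $T$ of $H$ at a maximal clique containing $K$. The set $A\setminus B$ consists of vertices adjacent to some vertex of $Y$ and to no vertex of $X$; symmetrically for $B\setminus A$. I would show that no vertex of $A\setminus B$ is adjacent to a vertex of $B\setminus A$ unless both are adjacent to a common vertex of $K$. Moreover, every induced $P_3$ of $H[A\cup B]$ that meets both $A\setminus B$ and $B\setminus A$ should be trapped inside $N(K)$, near the root of $T$. The tool is the subtree property of $T$ together with the separator property: a path from a vertex outside $N[K]$ to $K$ passes through the parent-separator of its subtree.

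Step 2 is the exchange. Take optimal solutions $S_A$ of $H[A]$ and $S_B$ of $H[B]$, and partition their clusters according to the branches of $T$ below the root, i.e., the components of $H-K$ together with their attachments in $N(K)$. For each branch, I compare the two local pieces. The piece of one solution goes into the candidate for $A\cup B$ and the piece of the other into the candidate for $A\cap B$. In branches touching neither $A\setminus B$ nor $B\setminus A$ the choice is free. In a branch touching only $A\setminus B$, the $S_A$-piece goes to $A\cup B$ and the $S_B$-piece, which lies inside $A\cap B$, goes to $A\cap B$; symmetrically when the branch touches only $B\setminus A$. The combined sets $S_\cup$ and $S_\cap$ then have total weight $w(S_A)+w(S_B)$. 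It remains to verify that each is a cluster subgraph, i.e., that no induced $P_3$ is created across branches. Distinct branches interact only through $K$, and Step 1 controls that interaction.

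The main obstacle is branches meeting both $A\setminus B$ and $B\setminus A$, and clusters that straddle several branches through vertices of $N(K)$. Here a simple "take one side" rule may create a $P_3$ through $K$. What I would try first is to refine the partition: recursively apply the same argument inside the branch, using the branch's top separator (a clique) in place of $K$, and set up an induction on the number of nodes of $T$. The induction hypothesis would be the uncrossing inequality for pairs of sets of the form "complement of the closed neighborhood of a subset of a clique". If that fails, I would fall back to a direct argument: cluster by cluster, pair a cluster of $S_A$ meeting $A\setminus B$ with the clusters of $S_B$ it conflicts with. Chordality should force these conflicts to form a forest-like structure, which can be two-coloured into $S_\cup$ and $S_\cap$.
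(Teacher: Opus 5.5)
Your reduction is sound. For nonempty $X\subseteq K$ one has $g(X)=w(X)+\psi(H-N[X])$, your set identities are right, and the uncrossing inequality you arrive at is true: the paper's construction, applied to $S_1=S_A\cup X$ and $S_2=S_B\cup Y$, yields exactly such a redistribution of clusters. Branches meeting at most one of $A\setminus B$, $B\setminus A$ are handled correctly, but there the inequality is a trivial equality.

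The gap is that the proposal stops exactly where the content of the theorem lies. In a component $D$ of $H-K$ meeting both $A\setminus B$ and $B\setminus A$, no per-branch rule works. Take $K=\{x,y\}$, $X=\{x\}$, $Y=\{y\}$, and $D=\{a,b,c\}$ with edges $ca$, $cb$, $cx$, $cy$, $ax$, $by$; this graph is chordal. Then $S_A=\{b\}$, $S_B=\{a\}$ and $A\cap B=\emptyset$, so both pieces must go to the $A\cup B$ side. Your induction is not well-founded as described: if $H-K$ is connected, the single branch with its separator is essentially the original instance, and you never name a strictly smaller instance to recurse on. The local facts of Step~1 are true, even in a stronger form (by chordality there is no edge at all between $A\setminus B$ and $B\setminus A$). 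But they do not control the real difficulty, which is long alternating chains of conflicting clusters of $S_A$ and $S_B$ running through $A\cap B$, arbitrarily far from $K$. Separately, since $A\cup B$ is disjoint from $K$ for nonempty $X,Y$, no cluster can straddle two components of $H-K$, so that worry is unfounded.

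Your fallback is the right route (it is the paper's), but it misidentifies what chordality must deliver. In the bipartite conflict graph on the clusters of $S_A$ and $S_B$, the two-coloring is already fixed by the bipartition, so acyclicity is beside the point. What matters is the phase chosen in each component. An $S_A$-cluster meeting $N(Y)$ and an $S_B$-cluster meeting $N(X)$ are both forced to the $A\cup B$ side. So you need: no component of the conflict graph contains clusters of both kinds. Equivalently, after adding $X$ and $Y$ as nodes, the edge $XY$ is a bridge, which is the key claim of the paper's proof.

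Given this, each component sends its $S_A$-clusters to the $A\cup B$ side and its $S_B$-clusters to the $A\cap B$ side, or the reverse, according to which kind of anchor it contains. One checks directly that both sides are cluster graphs inside $A\cup B$ and $A\cap B$ respectively, with total weight $w(S_A)+w(S_B)$.

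The bridge property is where chordality enters. Take a shortest alternating chain from a cluster containing a neighbor of some $y\in Y\setminus X$ to a cluster containing a neighbor of some $x\in X\setminus Y$. Choosing one or two vertices per cluster gives a walk from a neighbor of $y$ to $x$. Its ends are nonadjacent, and $y$ is adjacent to it only at its ends. Hence an induced subpath between the ends, together with $y$, is a chordless cycle of length at least four. Without an argument of this kind, the proposal does not prove the theorem.
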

The proof presented in Section~\ref{sec:supermodularity} can be read independently from the others.

\section{Interval graphs}
This section can be skipped without loss of continuity; some reader may find it helpful in understanding our main algorithm, which can be viewed as a generalization of this section.

A graph is an \emph{interval graph} if its vertices can be represented by intervals on the real line so that two vertices are adjacent if and only if their corresponding intervals intersect.
Every interval graph is chordal and admits a clique tree that is a path, called a \emph{clique path}.
Cao et al.~\cite[Theorem 3.6]{cao-18-vertex-deletion-chordal} solved cluster vertex deletion on interval graphs using dynamic programming.
We present a simpler dynamic program for this problem.
Let $G$ be an interval graph and let $\langle K_{1}, K_{2}, \ldots, K_{\ell} \rangle$ be a clique path of $G$.
For convenience, we introduce a dummy set $K_{\ell+1}$ and set $K_{\ell+1} = \emptyset$.
For $p = 1, 2, \ldots, \ell$, define
\[
  G_{p} = G\left[ \left(\bigcup_{i=1}^{p} K_{i} \right) \setminus K_{p+1} \right].
\]
Note that $G_{\ell} = G$.
We will compute $\psi(G_{p})$ for all $p$ by dynamic programming.
Let~$S_{p}$ be a maximum-weight solution of~$G_{p}$, with~$t$ clusters~$C_{1}, C_{2}, \ldots, C_{t}$.
The proof of the following simple observation is omitted because it is a special case of Proposition~\ref{lem:chordal-clusters}, to be proved in the next section.
\begin{proposition}\label{lem:clusters}
  For each $i = 1, \ldots, t$, there exists an index $j \in \{1, \ldots, p\}$ such that
$C_{i} \subseteq K_{j} \setminus K_{p+1}.$
\end{proposition}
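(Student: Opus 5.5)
Each cluster $C_i$ is a clique of $G_p$, hence a clique of $G$, so it lies in some maximal clique $K_j$ with $1 \le j \le \ell$. The plan is to show that $j$ can be taken in $\{1,\ldots,p\}$ and that $C_i$ avoids $K_{p+1}$. The second part is immediate, because $V(G_p)$ is by definition disjoint from $K_{p+1}$. So $C_i \cap K_{p+1} = \emptyset$, and once some $j \le p$ with $C_i \subseteq K_j$ is found, we get $C_i \subseteq K_j \setminus K_{p+1}$.

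For the first part, I will use the consecutive-ones property of a clique path: for every vertex $v$, the set of indices $\{j : v \in K_j\}$ is an interval $[a_v, b_v]$. Take a vertex $v \in V(G_p)$. Then $v$ lies in some $K_i$ with $i \le p$, so $a_v \le p$. It also does not lie in $K_{p+1}$, so the interval cannot contain $p+1$, and therefore $b_v \le p$. In short, every vertex of $G_p$ occurs only in cliques among $K_1, \ldots, K_p$.

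Now fix a cluster $C_i$. Its vertices are pairwise adjacent, so their intervals $[a_v, b_v]$ pairwise intersect. By the Helly property of intervals on a line, they have a common index $j$, namely $j = \max_{v \in C_i} a_v$, and this $j$ satisfies $j \le \min_{v \in C_i} b_v$. To be careful about this, I would justify that an edge $uv$ forces $[a_u,b_u] \cap [a_v,b_v] \neq \emptyset$: the edge $uv$ lies in some maximal clique $K_m$, and then $m$ belongs to both intervals. With this common index, $C_i \subseteq K_j$, and $j \le b_v \le p$ for any $v \in C_i$. Together with the first paragraph, this gives the claim.

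I expect no real obstacle. The only point that needs care is the Helly step, that is, ensuring the common index is a clique index at most $p$. This follows from $b_v \le p$ for all vertices of $G_p$. Note also that the argument does not use the optimality of $S_p$ at all, so the statement holds for any solution of $G_p$.
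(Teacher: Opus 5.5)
Your proof is correct and is essentially the paper's argument. The paper omits it as a special case of Proposition~\ref{lem:chordal-clusters}, whose proof likewise uses the connectedness of the cliques containing a single vertex $v \in C_i$ to force any maximal clique containing $C_i$ into the right range, and reads $C_i \cap K_{p+1} = \emptyset$ directly off the definition of $G_p$. Your Helly step is harmless but redundant: any $K_j \supseteq C_i$ from your first sentence already satisfies $j \le b_v \le p$ for any one $v \in C_i$.
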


The index $j$ in Proposition~\ref{lem:clusters} does not need to be unique.
If $C_{i} \subseteq K_{q} \setminus K_{p+1}$ and $C_{i} \subseteq K_{r} \setminus K_{p+1}$ for indices $q < r$, then by the definition of clique paths we have
\[
  C_{i} \subseteq K_{j} \setminus K_{p+1}
  \quad \text{for all } j = q, \ldots, r.
\]

Among the clusters $C_{1}, \ldots, C_{t}$ of $S_{p}$, choose $C_{t}$ to be the \emph{rightmost} one, in the sense that every vertex of $S_{p} \setminus C_{t}$ lies in some $K_{i}$ with $i$ strictly smaller than the smallest index of a clique intersecting $C_{t}$.
Let
\[
  j = \min \{ i \in \{1, \ldots, p\} \mid C_{t} \cap K_{i} \neq \emptyset \}.
\]
If $j = 1$, then $t = 1$, i.e., $S_{p}$ consists of a single cluster, and hence
\[
  \psi(G_{p}) = w(C_{t}) = \omega(G_{p}).
\]
Thus we may assume $j > 1$.
Note that~$S_{p} \setminus C_{t}$ is an optimal solution of $G_{j-1}$.

By the minimality of $j$, we have $C_{t} \cap K_{j-1} = \emptyset$.
Consequently, $C_{t} \cap K_{j}$ is disjoint from $K_{j-1}$.  By the definition of clique paths, we have~$C_{t}\subseteq (K_{j} \cup \cdots \cup K_{p}) \setminus (K_{j-1} \cup K_{p+1})$.
In summary, $C_{t}$ is a maximum-weight clique of
\[
  G\left[ (K_{j} \cup \cdots \cup K_{p}) \setminus (K_{j-1} \cup K_{p+1}) \right],
\]
because any clique of this subgraph makes a solution of~$G_{p}$ together with~$S_{p} \setminus C_{t}$.
Therefore,
\[
  \psi(G_{p}) = \psi(G_{j-1}) + \omega\left( G\left[ (K_{j} \cup \cdots \cup K_{p}) \setminus (K_{j-1} \cup K_{p+1}) \right] \right).
\]
Rewriting this in a fully dynamic-programming form, we obtain
\begin{equation}
  \label{eq:dp-cvd-interval}
  \psi(G_p)
  = \max_{i = 0}^{p - 1} \left\{ \psi(G_i) +
    \max_{j = i+1}^{p} \omega \left( G\left[ K_{j} \setminus (K_{i} \cup K_{p+1}) \right] \right)
  \right\},
\end{equation}
where, for convenience, we interpret $K_{0} = \emptyset$ and $G_{0}$ as the ``null graph'' with $\psi(G_{0}) = 0$.
Note that~$\omega \left( G\left[ K_{j} \setminus (K_{i} \cup K_{p+1}) \right] \right)$ refers to the maximum-weight clique among vertices that appear in~$K_j$ but do not belong to any clique~$K_{r}$ with~$r\le i$ or~$r\ge p+1$, i.e., the ``rightmost'' cluster with respect to the clique path.

It is straightforward to implement the recurrence~\eqref{eq:dp-cvd-interval} as a dynamic-programming algorithm.
Since there are~$O(n)$ maximal cliques and each inner maximization over~$\omega(\cdot)$ can be evaluated in~$O(n)$ time with appropriate preprocessing, the overall running time is~$O(n^{2})$.

\section{Chordal graphs}

We now describe the main algorithm stated in Theorem~\ref{thm:alg}, assuming Theorem~\ref{thm:submodular}, which will be proved in the next section.
Let~$G$ be a chordal graph and let~$T$ be a clique tree of~$G$.  
Choose an arbitrary node~$R$ and root~$T$ at~$R$.
For each node~$K \neq R$, let~$P(K)$ denote the parent of~$K$; for convenience, set~$P(R) = \emptyset$.
For each node~$K$, let~$T_K$ denote the subtree of~$T$ rooted at~$K$, and \[
  G_{K} = G\left[ \bigcup_{Q\in T_K} Q \setminus P(K) \right].
\]
Note that~$G_R = G$.
We use dynamic programming to compute~$\psi(G_{K})$ in a bottom-up manner.
If~$K$ is a leaf, then~$G_K$ is a complete graph with
\[
  \psi(G_{K}) = \omega(G_{K}) = w( V(G_K) ) = w( K \setminus P(K) ).
\]
Now consider a non-leaf node~$K$.
Let~$S_{K}$ be an optimal solution of~$G_{K}$, with clusters~$C_{1}, C_{2}, \ldots, C_{t}$.
\begin{proposition}\label{lem:chordal-clusters}
  For each~$i = 1, \ldots, t$, there exists a node~$Q$ in~$T_{K}$ such that~$C_{i}\subseteq Q\setminus P(K)$.
\end{proposition}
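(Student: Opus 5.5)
Each cluster $C_i$ is a clique of $G_K$, hence a clique of $G$. We will use the Helly property of subtrees of a tree to locate a maximal clique of $G$ that contains $C_i$, and then show this clique lies in $T_K$. Removing $P(K)$ afterwards costs nothing, because $C_i\subseteq V(G_K)$ is already disjoint from $P(K)$.

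First, for each vertex $u\in V(G_K)$, let $T^u$ be the subtree of $T$ formed by the maximal cliques containing $u$. We claim $T^u\subseteq T_K$. By definition of $G_K$, the vertex $u$ lies in some node of $T_K$. If $T^u$ also contained a node outside $T_K$, then, since $T^u$ is connected and every path from $T_K$ to the rest of $T$ passes through the edge $K P(K)$, both $K$ and $P(K)$ would be in $T^u$. That would give $u\in P(K)$, contradicting $u\in V(G_K)$. For $K=R$ the claim is trivial.

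Next, the vertices of $C_i$ are pairwise adjacent. For any two of them, $u$ and $u'$, the edge $uu'$ lies in some maximal clique, so $T^u\cap T^{u'}\neq\emptyset$. By the Helly property for subtrees of a tree, the family $\{T^u : u\in C_i\}$ has a common node $Q$. Equivalently, we can invoke the standard fact that every clique of a chordal graph is contained in a maximal clique, which is a node of $T$. Then $Q\in T^u\subseteq T_K$ for any $u\in C_i$, so $Q\in T_K$ and $C_i\subseteq Q$. Since $C_i\cap P(K)=\emptyset$, we conclude $C_i\subseteq Q\setminus P(K)$.

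The only step needing care is showing that the chosen maximal clique lies inside $T_K$ rather than elsewhere in $T$. This is exactly where the exclusion of $P(K)$ in the definition of $G_K$ is used, through the connectivity argument of the second paragraph.
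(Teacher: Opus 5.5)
Your proof is correct and follows essentially the same route as the paper. Both take a maximal clique $Q \supseteq C_i$ and observe that, for $u \in C_i$, the subtree of cliques containing $u$ would have to cross the edge $K P(K)$ if it left $T_K$, forcing $u \in P(K)$; hence $Q \in T_K$, and the Helly-property detour is unnecessary, as you note yourself.
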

\begin{proof}
  Since $C_{i}$ is a clique, it is a subset of some maximal clique $Q$ of $G$.
  To prove that~$Q\in T_K$, suppose for a contradiction that~$Q$ lies outside $T_K$.  Let~$v$ be any vertex of~$C_{i}$.
  Because $v\in C_{i} \subseteq V(G_{K})$, the vertex~$v$ belongs to some maximal clique in $T_K$.
  On the other hand, $v\in Q$ and~$Q\not\in T_K$.
  By the connectedness property of the set of maximal cliques containing~$v$, all such cliques form a connected subtree of~$T$; this subtree cannot simultaneously intersect both~$T_{K}$ and a node outside~$T_{K}$ without passing through~$K$ and its parent.
  Hence,~$v$ must be contained in both~$K$ and~$P(K)$, that is, $v \in K\cap P(K)$.  But such vertices are excluded from~$V(G_{K})$, contradicting~$v\in C_{i} \subseteq V(G_{K})$.

  Therefore,~$Q$ must lie in $T_K$.  Since $V(G_{K})$ excludes all vertices in $K\cap P(K)$, we also have $C_{i}\cap P(K) = \emptyset$.  Thus,~$C_{i}\subseteq Q\setminus P(K)$, as claimed.
\end{proof}

If the solution~$S_{K}$ is disjoint from~$K$, then for each child~$Q$ of~$K$, the set~$S_{K}\cap V(G_Q)$ is an optimal solution of the subgraph~$G_Q$, and hence
\[
  \psi(G_K) = \sum_{Q: P(Q) = K} \psi( G_Q ).
\]
In the other case,~$S_{K}\cap K\ne \emptyset$.
Without loss of generality, assume that~$C_{t}\cap K\ne \emptyset$.
Note that~$S_{K}\setminus C_{t}$ is an optimal solution of~$G_{K} - N[C_{t}]$, the subgraph of~$G_{K}$ obtained by removing all vertices in~$C_{t}$ as well as their neighbors.

\begin{lemma}\label{lem:subproblem}
  If $C_{t}\cap K\ne \emptyset$, then every component of $G_{K} - N[C_{t}]$ is a component of $G_{Q}$ for some node $Q$ in $T_{K}$ different from~$K$.
\end{lemma}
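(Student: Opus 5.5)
The plan is to show that $G_K - N[C_t]$ contains no vertex of $K$, and then to argue that every connected piece of what remains lives inside some $G_Q$ for a child-side subtree and is closed under the adjacency of $G_Q$.

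\emph{Step 1: no vertex of $K$ survives.} Pick $v\in C_t\cap K$. Since $K$ is a clique, every vertex of $V(G_K)\cap K$ lies in $N[v]\subseteq N[C_t]$. Hence $G_K - N[C_t]$ has no vertex of $K$.

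\emph{Step 2: the remaining vertices fall into child subgraphs.} Every remaining vertex $u$ lies in some maximal clique of $T_K$ but not in $K$. Let $Q$ be the highest node on the path from $u$'s cliques toward $K$ that contains $u$; its subtree of cliques containing $u$ lies entirely within $T_{Q'}$ for the child $Q'$ of $K$ on that side. Because $u\notin K$ and the cliques containing $u$ form a subtree, $u\in V(G_{Q'})$. The vertex sets $V(G_{Q'})$ over the children $Q'$ of $K$ are pairwise nonadjacent. Indeed, an edge $xy$ lies in a common maximal clique, and if $x$ and $y$ sit in different child subtrees, that clique lies in one of them. The vertex that does not belong there would then have to pass through $K$, contradicting that neither is in $K$.

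\emph{Step 3: refine to the correct node $Q$.} A component $D$ of $G_K-N[C_t]$ lies in a single $G_{Q'}$, but it need not be a whole component of $G_{Q'}$, because $N[C_t]$ may cut deeper. I would therefore choose $Q$ as the lowest node of $T_{Q'}$ such that $D\subseteq V(G_Q)$, and show that $D$ is a component of $G_Q$. This requires every vertex of $G_Q$ adjacent to $D$ to survive, i.e.\ to lie outside $N[C_t]$. The key point is that $C_t\subseteq Q_0\setminus P(K)$ for some node $Q_0$ by Proposition~\ref{lem:chordal-clusters}, and the cliques containing $C_t$-vertices form a connected region containing $K$.

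The main obstacle is this step. What I would try first is to set $Q$ to be the topmost node on the path from $Q'$ downward whose subtree is not touched by the clique-region of $N[C_t]$. Here $N[C_t]$ restricted to $T_{Q'}$ consists of vertices lying in cliques meeting the subtree spanned by $C_t$. I would then argue via the separator property, which says that $Q\cap P(Q)$ separates $V(G_Q)$ from the rest. Removing $N[C_t]\supseteq P(Q)\cap Q$ (to be checked) should isolate $G_Q$ exactly. If this fails for some components, I would instead allow a union over several sibling $G_Q$'s and verify separately that each component lies in one of them.
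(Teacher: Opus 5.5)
\textbf{There is a genuine gap.} Your Steps 1 and 2 are fine. Your first choice in Step 3, the lowest node $Q$ with $D\subseteq V(G_Q)$, is exactly the paper's choice, and you state the right target. But the proposal stops there. The rest of Step 3 is a plan with a claim left ``to be checked'' and a fallback, so the actual content of the lemma is never proved. Two arguments are missing.

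(i) The lowest such $Q$ contains no vertex of $C_t$. Otherwise $Q\subseteq N[C_t]$, so $D\subseteq V(G_Q)\setminus Q$. That set is the union of the pairwise nonadjacent sets $V(G_Z)$ over the children $Z$ of $Q$ (your Step 2 argument, applied at $Q$ instead of $K$). Being connected, $D$ lies in a single $V(G_Z)$, contradicting minimality. In particular $Q\neq K$.

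(ii) $Q\cap C_t=\emptyset$ forces $V(G_Q)\cap N[C_t]=\emptyset$. The cliques containing a vertex of $C_t$ form a connected region of $T$ (you noted this). This region contains $K$ and misses $Q$, hence misses all of $T_Q$. Every maximal clique containing a vertex $u\in V(G_Q)$ lies in $T_Q$, so $u$ can neither be in $C_t$ nor share a maximal clique with a vertex of $C_t$. Then $G_Q$ is an induced subgraph of $G_K-N[C_t]$ containing the component $D$, so $D$ is a component of $G_Q$.

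Your alternative choice of $Q$ does not close the gap as written. Suppose ``the clique-region of $N[C_t]$'' means the cliques meeting $N[C_t]$. Then the topmost untouched node $Q$ has $Q\cap P(Q)$ disjoint from $N[C_t]$, so the claim $N[C_t]\supseteq P(Q)\cap Q$ fails and the separator is not removed.

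For instance, take the unit-weight path on $a,b,c,d,e$ (in this order) with clique tree $\{a,b\},\{b,c\},\{c,d\},\{d,e\}$ rooted at $K=\{a,b\}$. The optimal solution is $\{a,b,d,e\}$, so take $C_t=\{a,b\}$. The only component of $G_K-N[C_t]$ is $\{d,e\}=V(G_{\{c,d\}})$. Yet the topmost node avoiding $N[C_t]=\{a,b,c\}$ is $\{d,e\}$, with $V(G_{\{d,e\}})=\{e\}$.

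The correct frontier consists of the topmost nodes containing no vertex of $C_t$. For these, $P(Q)\subseteq N[C_t]$ does hold, but you still need (ii) to conclude.
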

\begin{proof}
  Let $A$ be a component of $G_{K} - N[C_{t}]$, and let $Q$ be the lowest node in~$T_{K}$ (i.e., farthest from the root~$R$) such that $A\subseteq V(G_{Q})$.  Such a node exists because $A\subseteq V(G_{R}) = V(G)$.

  We first show that $Q\cap C_{t} = \emptyset$ (and hence $Q\neq K$).
  Suppose, for a contradiction, that there exists a vertex $x\in Q\cap C_{t}$.
  Since $Q$ is a clique and $x\in Q$, we have $Q\subseteq N[x]\subseteq N[C_{t}]$, so no vertex of $Q$ lies in $G_{K} - N[C_{t}]$.
  Consider the children of $Q$ in the clique tree. For any two distinct children $X$ and $Y$ of $Q$, there is no edge of $G$ with one endpoint in $V(G_{X})$ and the other in $V(G_{Y})$: otherwise, the ends of that edge would be contained in some maximal clique, but there is no such a node in~$T$ corresponding to this clique.
  Since $A$ is a component of $G_{K} - N[C_{t}]$ and $A\subseteq V(G_{Q})\setminus Q$, it follows that $A$ must be entirely contained in $V(G_{Z})$ for some child $Z$ of $Q$.  This contradicts the choice of $Q$ as the lowest node with $A\subseteq V(G_{Q})$.
  Therefore $Q\cap C_{t} = \emptyset$, and in particular $Q\neq K$.

Note that $A\subseteq V(G_{Q})$ and $A$ is a component of $G_{K} - N[C_{t}]$, so $A$ is a component of $G_{Q} - N[C_{t}]$.  Since $Q\cap C_{t} = \emptyset$, it follows that $V(G_{Q}) \cap N[C_{t}] = \emptyset$.  Thus, $A$ is a component of $G_{Q}$.
\end{proof}

By Lemma~\ref{lem:subproblem}, the value $\psi( G_{K} - N[C_{t}] )$ can be obtained directly from the dynamic programming table.
The problem of computing $\psi(G_{K})$ therefore reduces to
\begin{quote}
  Find a clique $C^*\subseteq V(G_{K})$ with $C^*\cap K \ne \emptyset$ that maximizes $w( C^* ) + \psi( G_{K} - N[C^*] )$.
\end{quote}
Note that $C^*$ is not necessarily a subset of $K$.
By Proposition~\ref{lem:chordal-clusters}, we have~$C^*\subseteq Q$ for some node~$Q$ in~$T_K$.
For each node~$Q$ in~$T_K$, we may compute a subset~$C^*_{Q}\subseteq Q$ such that~$C^*_{Q}\cap K \ne \emptyset$ and
\[
w(C^*_{Q}) + \psi( G_{K} - N[C^*_{Q}] )
\]
is maximized; it then suffices to take the maximum of these values over all nodes~$Q$ in~$T_{K}$.

However, a subtle but crucial issue is that we cannot apply Theorem~\ref{thm:submodular} directly to find~$C^*_{Q}$.
The algorithm of minimizing a submodular function requires a polynomial-time oracle of evaluating any set~\cite{jiang-22-minimize-convex-functions}.
The intersection of two subsets~$X, Y\subseteq Q$ that both intersect~$K$ may be disjoint from~$K$.
If~$X\cap Y$ is disjoint from~$K$, then we are not able to evaluate~$\psi( G_{K} - N[X\cap Y] )$ from the dynamic programming table.
We proceed as follows.
For each~$v\in K\cap Q$, we compute a subset~$C^*_{Q, v} \subseteq Q$ containing~$v$ that maximizes
\[
\psi\left( G_{K} - N \left[ C^*_{Q, v}  \right] \right) + w\left( C^*_{Q, v} \right).
\]
Equivalently, we work with the function~$f_{v}: 2^{Q\setminus \{v\}}\to \mathbb{Q}$ defined by
\[
  f_{v}(X) = g( X\cup \{v\} ) = \psi( G_{K} - N[ X\cup \{v\} ] ) + w( X\cup \{v\} ),
\]
where~$g$ is the function from Theorem~\ref{thm:submodular}, defined on the chordal graph~$G_{K}$ with respect to clique~$Q\setminus P(K)$.
Let~$H = G_{K}$ and let $K'=Q\setminus P(K)$.  For each $X\subseteq K'$, let $g(X)$ be the weight of an optimal solution of $H$ in which $X$ appears as a cluster.  Then Theorem~\ref{thm:submodular} says that $g$ is supermodular on $2^{K'}$.
The function~$f_{v}$ is supermodular because
\[
  f_{v}(X\cup Y) + f_{v}(X\cap Y)
  = g( X\cup Y\cup \{v\} ) + g( X\cap Y\cup \{v\} )
  \ge
  g( X\cup \{v\} ) + g( Y\cup \{v\} )
  =   f_{v}(X) + f_{v}(Y),
\]
where the inequality follows from Theorem~\ref{thm:submodular}.

In summary,
for each clique~$K$ in the clique tree, we fix an optimal solution~$S_{K}$ of~$G_{K}$ and consider two cases.
\begin{itemize}
\item $S_{K}\cap K = \emptyset$: Then~$\psi(G_K)$ is the sum over children. \item $S_{K}\cap K \ne \emptyset$: Then we identify a maximal clique~$Q$ in the subtree containing the cluster of~$S_{K}$ that intersects~$K$, guess a vertex~$v \in K \cap Q$ that lies in that cluster, and then choose the remaining vertices of the cluster as~$X \subseteq Q \setminus \{ v \}$ by maximizing a supermodular function~$f_{v}(X)$.  The resulting value of~$f_{v}(X)$ gives~$\psi(G_K)$.
\end{itemize}
We are ready for the dynamic programming recurrence and use it to prove the main theorem:
\begin{equation}
  \label{eq:2}
  \psi( G_K ) = \max \left\{\quad\;
    \sum_{\mathclap{Q: P(Q) = K}} \psi(G_Q), \quad
    \max_{Q\in T_{K}}
    \max_{v\in Q\cap K\setminus P(K)}
    \max_{X\subseteq Q\setminus \{ v \}} f_{v}(X) 
\right\}.
\end{equation}

\begin{proof}[Proof of Theorem~\ref{thm:alg}]
We build a dynamic programming table to find a maximum-weight induced cluster subgraph of~$G$.  The vertices not in this subgraph form a solution to the cluster vertex deletion problem.
  Jiang~\cite[Theorem 1.7]{jiang-22-minimize-convex-functions} shows that a supermodular set function with a polynomial-time evaluation oracle can be maximized in polynomial time.  In particular, for each fixed maximal clique~$K$, $Q\in T_{K}$, and~$v\in K\cap Q$, we can maximize~$f_{v}$ over subsets of~$Q\setminus\{ v \}$ using at most~$O (|Q|^3)$ evaluations of~$f_{v}$.
  In our setting, each evaluation is done by computing~$\psi( G_{K} - N[X\cup \{ v \} ] )$ from the dynamic programming table in~$O(n)$ time: by Lemma~\ref{lem:subproblem}, every component of $G_{K} - N[X\cup \{ v \} ]$ is the vertex set of $G_{Z}$ for some node $Z\in T_{K}\setminus \{K\}$.  Moreover, if two such nodes lie in the same rooted subtree, only the highest one is relevant, because its graph $G_{Z}$ already contains all lower descendants.  Therefore, $G_{K} - N[X\cup \{ v \} ]$ is the disjoint union of the graphs $G_{Z}$ over the roots $Z$ of the surviving maximal subtrees of~$T_{K}$, and its optimal value is obtained by summing the precomputed values $\psi(G_{Z})$.

  For a fixed~$K$, there are~$O(n)$ choices of~$Q \in T_K$ and, for each~$Q$, at most~$O(n)$ choices of~$v$.  For each pair~$(Q, v)$, Jiang's algorithm makes $O(|Q|^3) = O(n^3)$ oracle calls, each in~$O(n)$ time, yielding $O(n^5)$ per~$(Q, v)$ and thus $O(n^6)$ per~$K$.  There are $O(n)$ nodes~$K$, and this gives the claimed $O(n^7)$ bound.
\end{proof}
We remark that the $O(n^7)$ running time can likely be improved with a more careful analysis of our algorithm.

\section{Theorem~\ref{thm:submodular}: Supermodularity of the function $g$}
\label{sec:supermodularity}

In this section we explain the structural reason why the function $g$ is supermodular.
Before the proof, we present a high-level overview of the main ideas.

Recall that $K$ is a fixed clique of a weighted chordal graph~$H$, and for a subset $X \subseteq K$, the value $g(X)$ is defined as the weight of an optimal solution of $H$ in which $X$ appears as a cluster. Our goal is to prove that for any two subsets $A_1, A_2 \subseteq K$,
\begin{equation}
  \label{eq:supermodular}
  g(A_1 \cup A_2) + g(A_1 \cap A_2) \ge g(A_1) + g(A_2).
\end{equation}

Suppose neither~$A_1$ nor~$A_2$ is contained in the other, and for $i=1,2,$ let $S_i$ be an optimal solution of $H$ in which $A_i$ is a
cluster, so $g(A_i) = w(S_i)$. We may assume that $V(H) = S_1 \cup S_2$, since
deleting vertices outside this union can only decrease the left-hand side of
\eqref{eq:supermodular} while leaving the right-hand side unchanged.

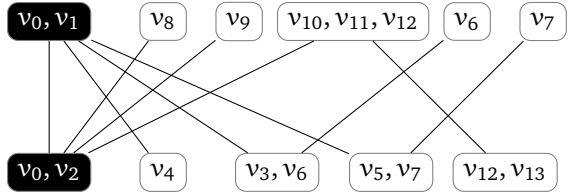
\begin{figure}[h]
  \centering\small
  \begin{subfigure}[b]{0.4\linewidth}
    \centering
    \begin{tikzpicture}[every node/.style={max clique}, scale=1]
      \foreach[count=\i from 2] \v/\x in {{$v_{8}$}/1.5, {$v_{9}$}/2.5, {$v_{10}, v_{11}, v_{12}$}/4, {$v_{6}$}/5.5, {$v_{7}$}/6.5}
      \node (x\i) at (\x, 2) {\v};
      \node[text=white, fill=black] (x1) at (0, 2) {$v_{0}, v_{1}$};
      \foreach[count=\i] \v in {{$v_{4}$}, {$v_{3}, v_{6}$}, {$v_{5}, v_{7}$}, {$v_{12}, v_{13}$}}
      \node (y\i) at ({\i*1.5}, 0) {\v};
      \node[text=white, fill=black] (y0) at (0, 0) {$v_{0}, v_{2}$};
      \draw[] (x1) -- (y0);
      \draw[] (x1) -- (y1);
      \draw[] (x1) -- (y2) -- (x5);
      \draw[] (x1) -- (y3) -- (x6);
      \draw[] (x2) -- (y0);
      \draw[] (x3) -- (y0);
      \draw[] (y0) -- (x4) -- (y4);
    \end{tikzpicture}
\end{subfigure}
  \caption{The bipartite graph~$B$ when the graph is in Figure~\ref{fig:clique-tree}, with unit weight, $K = \{v_{0}, v_{1}, v_{2}\}$, $A_{1} = \{v_{0}, v_{1}\}$, and $A_{2} = \{v_{0}, v_{2}\}$.
    Clusters corresponding to~$S_{1}$ are drawn at the top, and clusters corresponding to~$S_{2}$ at the bottom, where the nodes corresponding to~$A_{1}$ and~$A_{2}$ are highlighted.
    Edges represent pairs of clusters with at least one edge of~$H$ between them.}
  \label{fig:supermodular}
\end{figure}

The key idea is to study how the clusters of $S_1$ and $S_2$ interact. To this
end, we define an auxiliary graph $B$ whose nodes are all clusters of $S_1$
and of $S_2$, and we put an edge between two nodes if there is at least one
edge of $H$ with one endpoint in each of the corresponding clusters.
See Figure~\ref{fig:supermodular} for an illustration of~$B$ on the chordal graph from Figure~\ref{fig:clique-tree}.
By
construction, $B$ is bipartite, with one part consisting of clusters of $S_1$
and the other of clusters of $S_2$. The crucial structural property is that
$B$ is in fact a tree: the presence of a cycle in $B$ would give rise to an
induced cycle of length at least four in $H$, contradicting the chordality
encoded by the clique tree.

We now root $B$ at the edge $A_1 A_2$ (or the node obtained by merging $A_1$ and~$A_2$ into a single node).  Traversing $B$ from this root, we assign the
clusters on alternating levels to two new solutions $S_{\cup}$ and $S_{\cap}$:
the root node, corresponding to $A_1 \cup A_2$, is assigned to $S_{\cup}$, and
we add $A_1 \cap A_2$ as a cluster of $S_{\cap}$. Away from $A_1 \cup A_2$,
this procedure simply repartitions the clusters of $S_1$ and $S_2$ so that
each original cluster appears exactly once in either $S_{\cup}$ or
$S_{\cap}$. Consequently,
\[
  w(S_{\cup}) + w(S_{\cap}) = w(S_1) + w(S_2) = g(A_1) + g(A_2),
\]
and since $S_{\cup}$ and $S_{\cap}$ are feasible solutions with
$A_1 \cup A_2$ and $A_1 \cap A_2$, respectively, as clusters, they provide
lower bounds for $g(A_1 \cup A_2)$ and $g(A_1 \cap A_2)$. This yields
\eqref{eq:supermodular}, and thus establishes the supermodularity of $g$.

\begin{proof}[Proof of Theorem~\ref{thm:submodular}]
  Let $A_{1}, A_{2}$ be two subsets of~$K$.  We need to show~$g(A_{1}\cup A_{2}) + g(A_{1}\cap A_{2}) \ge g(A_{1}) + g(A_{2})$.
  Assume $A_{1} \not\subseteq A_{2}$ and~$A_{2} \not\subseteq A_{1}$; otherwise the equality holds trivially.
  For $i = 1, 2$, let $S_{i}$ be an optimal solution of $H$ in which $A_{i}$ is a cluster.
  We will construct two solutions $S_{\cup}$ and $S_{\cap}$ such that $A_{1}\cup A_{2}$ is a cluster of $S_{\cup}$ and $A_{1}\cap A_{2}$ is a cluster of $S_{\cap}$, and
  \[
    w \left(S_{\cup} \right) + w \left(S_{\cap} \right) = w \left(S_{1} \right) + w \left(S_{2} \right).
  \]

  We may assume that $V(H) = S_{1}\cup S_{2}$: deleting vertices outside $S_{1}\cup S_{2}$ does not change~$g(A_1)$ or~$g(A_2)$, and it cannot increase $g(A_{1}\cup A_{2})$ or $g(A_{1}\cap A_{2})$.
  Hence,~$K = A_{1}\cup A_{2}$.
  We also assume that $H$ is connected: since $A_{1}\cup A_{2}$ is a clique, it lies in a single component of~$H$, and for all other components we can use the same cluster subgraph in all solutions.  We have the following as a result.
\begin{equation}
    \label{eq:4}
    \text{No clique can simultaneously appear as a cluster in both $S_{1}$ and $S_{2}$.}\end{equation}
  Let~$C$ be any cluster in~$S_{1}$.
  Since~$H$ is connected, there is a vertex~$x\in V(H)\setminus C$ that has a neighbor in~$C$.
  Since~$V(H) = S_{1}\cup S_{2}$, we have~$x\in S_{2}$.
  Hence,~$C$ cannot be a cluster in~$S_{2}$.  The other direction is symmetric and omitted.

  We define an auxiliary bipartite graph~$B$ as follows.
  For each cluster in~$S_{1}$ and each cluster in~$S_{2}$, we introduce a node.
  For every two nodes~$C'$ and~$C''$, we add an edge if there is an edge in~$H$ between a vertex of~$C'$ and a vertex of~$C''$.
  The graph~$B$ is bipartite because there is no edge between distinct clusters of~$S_{1}$, nor between distinct clusters of~$S_{2}$.

  First, two nodes~$C'$ and~$C''$ in~$B$ are nonadjacent if and only if the corresponding clusters are disjoint and nonadjacent in~$H$.
  Indeed, if $C'\cap C''\ne \emptyset$, then there is an edge in~$H$ between a vertex of~$C'\cap C''$ and a vertex of~$(C'\cup C'')\setminus (C'\cap C'')$, which is nonempty by~\eqref{eq:4}, so the corresponding nodes are adjacent in~$B$.
  In particular, there is an edge between~$A_{1}$ and~$A_{2}$, and there is an edge whenever~$C' \subset C''$ or~$C'' \subset C'$.

  Second, we show that $B$ is connected.
  Suppose not, and let~$\mathcal{C}$ be a component of~$B$ not containing the node~$A_{1}$.
  Let $U = \bigcup_{C\in\mathcal{C}} C$.
  Note that $U \subseteq V(H)\setminus K$ because $A_{1}$ (and hence $A_{2}$) lie in a different component of~$B$.
  Since $H$ is connected, there is an edge of~$H$ between~$U$ and~$V(H)\setminus U$; let them be~$x$ and~$y$, respectively.
  By definition of~$U$, the vertex~$x$ belongs to some cluster~$C_{x}$ whose corresponding node lies in~$\mathcal{C}$.
  On the other hand, the vertex~$y$ is in some cluster~$C_{y}$ because~$V(H) = S_{1}\cup S_{2}$.
  There is an edge in~$B$ between~$C_{x}$ and~$C_{y}$.
  But then~$y$ must be in~$U$ as well, a contradiction.
  Hence~$B$ is connected.

  We now prove the key claim.\footnote{A similar argument shows that $B$ is in fact a tree.  The weaker statement~\eqref{eq:1} suffices for our purposes.}
\begin{equation}
    \label{eq:1}
    \text{The edge $A_{1}A_{2}$ is a bridge of~$B$ (its removal disconnects~$B$).}\end{equation}
  Suppose, to the contrary, that $A_{1}A_{2}$ lies in a cycle of~$B$.
  Take an induced cycle of~$B$ containing this edge, and write it as
  \[
    A_{1}, A_{2}, C_{1}, C_{2}, C_{3}, \ldots, C_{2\ell}.
  \]
  Since $B$ is bipartite, we have $C_{2i-1} \subseteq S_{1}$ and~$C_{2i} \subseteq S_{2}$ for~$i = 1, \ldots, \ell$.
  Along the cycle, we can choose a path in $H$ from a vertex in~$A_{2}\setminus A_{1}$ to a vertex in~$A_{1}\setminus A_{2}$, using one or two vertices from each cluster on the cycle, in such a way that all internal vertices lie in clusters other than $A_{1}$ and $A_{2}$.  
  Let~$C_{0} = A_{2}$ and~$C_{2\ell + 1} = A_{1}$.
  For~$i = 0, 1, \ldots, 2 \ell$, choose an edge between~$x_{i}^{+}\in C_{i}$ and~$x_{i+1}^{-}\in C_{i+1}$.
  The vertices~$x_{i}^{-}$ and~$x_{i}^{+}, i = 1, \ldots, 2 \ell,$ from the same cluster are either identical or adjacent.
Note that $x_{2\ell + 1}^{-}\in A_{1}\setminus A_{2}$ (because
$C_{2\ell}$, which contains~$x_{2\ell}^{+}$, and~$A_{2}$ are two distinct clusters of~$S_{2}$) and~$x_{0}^{+}\in A_{2}\setminus A_{1}$ by symmetry.
  Let
  \[
    P =
    \begin{cases}
      x_{1}^{+}, x_{2}^{-}, (x_{2}^{+}), \ldots, x_{2\ell}^{-}, (x_{2\ell}^{+}), x_{2\ell + 1}^{-} & \text{ if $x_{0}^{+}$ is adjacent to $x_{1}^{+}$},   \\
      x_{1}^{-}, x_{1}^{+}, x_{2}^{-}, (x_{2}^{+}), \ldots, x_{2\ell}^{-}, (x_{2\ell}^{+}), x_{2\ell + 1}^{-} & \text{ otherwise,}
    \end{cases}
  \]
  where $x_{i}^{+}, i = 2, \ldots, 2\ell,$ is used only if $x_{i}^{+} \neq x_{i}^{-}$.
In either case, $x_{0}^{+}$ is adjacent to the endpoints of~$P$, and by the choice of the vertices on the clusters of the induced cycle, $x_{0}^{+}$ has no other neighbors on~$P$.
The two ends of~$P$ are nonadjacent in~$H$ because they belong to different clusters of~$S_{1}$.
The path~$P$ is not necessarily induced, but we can always find an induced path connecting its ends using only vertices on~$P$.
Thus, $x_{0}^{+}$ together with the vertices on this induced path forms an induced cycle of length at least four, contradicting that $H$ is chordal.
  This proves that $A_{1}A_{2}$ is a bridge of~$B$.

  We now construct $S_{\cup}$ and~$S_{\cap}$.
  Consider the subgraph obtained from~$B$ after removing the bridge~$A_{1}A_{2}$ and for~$i = 1, 2,$ let $\mathcal{C}_{i}$ be the set of clusters corresponding to the connected component containing $A_{i}$.
  We define \begin{align*}
    S_{\cup} &= \left( \bigcup_{C \in \mathcal{C}_{1} ,\, C \subseteq S_{1}} C \right ) \cup \left( \bigcup_{C \in \mathcal{C}_{2} ,\, C \subseteq S_{2}} C \right ),
\\
    S_{\cap} &= \left( \bigcup_{C \in \mathcal{C}_{1} ,\, C \subseteq S_{2}} C \right ) \cup \left( \bigcup_{C \in \mathcal{C}_{2} ,\, C \subseteq S_{1}} C \right ) \cup ( A_{1}\cap A_{2} ).
\end{align*}
Note that $A_{1}, A_{2} \subseteq S_{\cup}$, and recall that $A_{1}\cup A_{2} = K$.

To show that $S_{\cup}$ and $S_{\cap}$ are valid solutions, it suffices to examine only the newly created distinguished clusters $A_1\cup A_2$ and $A_1\cap A_2$.  Indeed, every other cluster in $S_{\cup}$ or $S_{\cap}$ is inherited unchanged from one of the original solutions $S_1$ and $S_2$, and hence already induces a clique and has no neighbors inside the corresponding solution.  Therefore, we only need to verify that $A_1\cup A_2$ is a cluster of $S_{\cup}$ and that $A_1\cap A_2$ is a cluster of $S_{\cap}$.  By construction, $S_{\cup}$ contains no vertices in $N(A_1)\cap S_2$ and no vertices in $N(A_2)\cap S_1$.  Since $A_1\cup A_2=K$ and every vertex adjacent to $K$ that belongs to $S_1\cup S_2$ must lie in one of these two sets, it follows that $S_{\cup}\cap N(K)=\emptyset$.  Hence, $A_1\cup A_2$ is a cluster of $S_{\cup}$.  Moreover, from the assumption~$V(H)=S_1\cup S_2$ we obtain
\[
N(A_1\cap A_2)\cap S_i \subseteq A_i \qquad \text{for } i=1,2,
\]
because $A_i$ is a cluster of $S_i$.  Therefore,
\[
N(A_1\cap A_2)\subseteq A_1\cup A_2,
\]
and since $S_{\cap}$ contains no vertex of $(A_1\cup A_2)\setminus (A_1\cap A_2)$, the set $A_1\cap A_2$ has no neighbor in~$S_{\cap}$.  Thus, $A_1\cap A_2$ is a cluster of~$S_{\cap}$.

  Finally, observe that $S_{\cup}\setminus (A_{1}\cup A_{2})$ and~$S_{\cap}\setminus (A_{1}\cap A_{2})$ form a partition of~$(S_{1}\cup S_{2})\setminus (A_{1}\cup A_{2})$.
  Hence
\begin{align*}
    w \left(S_{\cup} \right) + w \left(S_{\cap} \right)
    =& w \left( (S_{1}\cup S_{2})\setminus (A_{1}\cup A_{2}) \right)
    + w(A_{1}\cup A_{2}) + w(A_{1}\cap A_{2})
    \\
    =&
       w \left(S_{1}\setminus A_{1} \right) + w \left(S_{2} \setminus A_{2} \right) + w(A_{1}) + w(A_{2})
       \\
    =& w \left(S_{1} \right) + w \left(S_{2} \right).
  \end{align*}
By the definition of the function~$g$, we have \[
    g(A_{1}\cup A_{2}) + g(A_{1}\cap A_{2})
    \;\ge\; w \left(S_{\cup} \right) + w \left(S_{\cap} \right)
    \;=\; w \left(S_{1} \right) + w \left(S_{2} \right)
    \;=\; g(A_{1}) + g(A_{2}),
  \]
  and hence $g$ is supermodular.
\end{proof}


\end{document}